\renewcommand{\emptyset}{\varnothing}
\renewcommand{\alpha}{\propto}
\newcommand{\at}[1]{{\rm #1}}
\newcommand{\argument}[1]{\ensuremath{\mathcal{#1}}}
\newcommand{\prf}[1]{\ensuremath{\mathcal{#1}}}
\newcommand{\proves}[1][]{\vdash}
\newcommand{\base}[1]{\ensuremath{\mathscr{#1}}}
\newcommand{\event}[1]{\base{#1}}
\newcommand{\partialto}{\rightharpoonup}
\newcommand{\seq}{\triangleright}
\newcommand{\ergo}{\textsf{e}}
\newcommand{\calculus}[1]{\ensuremath{\mathsf{#1}}}
\newcommand{\system}[1]{\ensuremath{\mathfrak{#1}}}
\newcommand*{\setStyle}[1]{\ensuremath{\mathbb{#1}}}
\newcommand{\setGoals}{\setStyle{GOALS}\xspace}
\newcommand{\setFormulas}{\setStyle{F}}
\newcommand{\setEvents}{\setStyle{EVENTS}\xspace}
\newcommand{\setAtoms}{\setStyle{P}\xspace}
\newcommand{\setSequent}{\setStyle{S}\xspace}
\newcommand{\funcList}[1]{\textsc{list}(#1)}
\newcommand*{\rn}[1]  {\ensuremath{\mathsf{#1}}}
\newcommand*{\cut}{\rn{cut}}
\newcommand*{\weak}{\rn{w}}
\newcommand*{\exch}{\rn{e}}
\newcommand*{\cont}{\rn{c}}
\newcommand*{\rrn}[2][]  {\rn{#2_{R#1}}}
\newcommand*{\lrn}[2][]  {\rn{#2_{L#1}}}
\newcommand*{\irn}[2][]  {\rn{#2_{I#1}}}
\newcommand*{\ern}[2][]  {\rn{#2_{E#1}}}
\newcommand{\ax}{\mathsf{ax}}
\title{Proof-theoretic Semantics and Tactical Proof}
\author{Alexander V. Gheorghiu} 
\address{University College London} 
\email{alexander.gheorghiu.19@ucl.ac.uk}
\author{David J. Pym}
\address{University College London \& Institute of Philosophy, University of London} 
\email{d.pym@ucl.ac.uk}
  \newtheorem{Theorem}[Thm]{Theorem}   
\newtheorem{Definition}[Thm]{Definition}      %                   "
\newtheorem{Example}[Thm]{Example}               %  
\newtheorem{Proposition}[Thm]{Proposition}
\newtheorem{Corollary}[Thm]{Corollary}
\begin{document}

% \author{\small 
% \begin{tabular}{cc}
% Alexander V. Gheorghiu & David J. Pym \\ 
% University College London & University College London \& \\ 
% London WC1E 6BT, UK    & Institute of Philosophy, University of London \\ 
%     & London WC1E 6BT, UK \\[1.5ex] 
% alexander.gheorghiu.19@ucl.ac.uk & d.pym@ucl.ac.uk 
% \end{tabular}
% }

\date{} 

\maketitle

\begin{abstract}
The use of logical systems for problem-solving may be as diverse as in proving theorems in mathematics or in figuring out how to meet up with a friend. In either case, the problem solving activity is captured by the search for an \emph{argument}, broadly conceived as a certificate for a solution to the problem. Crucially, for such a certificate to be a solution, it has be \emph{valid}, and what makes it valid is that they are well-constructed according to a notion of inference for the underlying logical system. We provide a general framework uniformly describing the use of logic as a mathematics of reasoning in the above sense. We use proof-theoretic validity in the Dummett-Prawitz tradition to define validity of arguments, and use the theory of tactical proof to relate arguments, inference, and search.  
\end{abstract}

% \keywords{logical systems, tactics, tactical proof, proof-search, proof-theoretic semantics}

\section{Introduction}\label{sec:intro}
The definition of a system of logic may be given \emph{proof-theoretically} as a collection of rules of inference that, when composed in specified ways, determine \emph{proofs}; that is, formal constructions that establish 
that a conclusion is a consequence of some assumptions or axioms. In other words, proofs are objects regulated by rules of a system that determine the inference of a 
conclusion from premisses that have been established: 
\[
\frac{\mathrm{Established \; Premiss}_1 \quad \ldots \quad \mathrm{Established \; Premiss}_k}{\mathrm{Conclusion}}{\big\Downarrow}
\]
We call this proof-theoretic formulation \emph{deductive logic}. 

Deductive logic is useful as a way of defining what proofs are, but it does not reflect either how logic is typically used in practical reasoning problems or the method by which proofs are found. In practice, proofs are typically constructed by starting with a desired, or putative, conclusion and applying the rules of inference \emph{backward}. Read from conclusion to premisses, the rules are sometimes called 
\emph{reduction operators}, and denoted
\[
\frac{\mathrm{Sufficient \; Premiss}_1 \quad \ldots \quad \mathrm{Sufficient \; Premiss}_k}{\mathrm{Putative \; Conclusion}} {\big\Uparrow}  
\]
We call the constructions in a system of reduction operators \emph{reductions}. Crucially, the space of reductions contains proofs and objects that are not proofs --- specifically, reductions that cannot be continued in a way that eventually reaches a proof.
This proof-theoretic formulation has been dubbed \emph{reductive logic}~\cite{pym2004reductive}. 

Reductive logic more closely resemble what mathematicians do when proving theorems and, more generally, how people solve problems using formal representations. It is also the paradigm on logic used for diverse applications in informatics and other systems-oriented sciences, including, but not limited to, areas such as program and systems verification, natural language processing, and knowledge representation. Indeed, it is the reductive perspective that underpins the use of logic in proof assistants --- for example, LCF, HOL, Isabelle, Coq, Mizar, Twelf, and more~\cite{wikiproofassistants}. More generally, it is the paradigm capturing the deployment of logic in the context of capturing human reasoning, which is discussed extensively in, for example, the work of Kowalski~\cite{Kowalski1986} and Bundy~\cite{Bundy1983}. 

There are semantics of proofs, which give abstract accounts of object that certify a statement as being valid in a logic, with a prime example given by the categorical treatment of the BHK interpretation of intuitionistic logic. But, importantly, reductive logic is concerned not only with what proofs but more generally with things that may be continued to form proofs through reduction. The semantics of proofs do not treat this bigger space of objects, but these are central in the use of logic as a mathematics of reasoning.   Therefore, given the widespread application of reductive logic as a reasoning technology, we require a semantic framework explicating the meaning of reductions in the context of reasoning. Reciprocally, we require a general framework stating what \emph{valid} reasoning is relative to a certain problem domain. Such a framework would provide a basis for understanding and analyzing the aforementioned wide-ranging applications of logic in practical reasoning tasks, which is analogous to the use of semantics for programming languages. 

A general framework supporting the mechanization of reductive logic is the theory of \emph{tactical proof} introduced by Milner~\cite{milner1984tactics}. While little developed mathematically, the theory is sufficiently general to encompass as diverse reasoning activities as proving a formula in a formal system and seeking to meet a friend before noon on Saturday. It does not concern finding the \emph{best} way to reason about a goal (e.g., minimizing the prospect of failure), though these things are important, instead it makes precise how concepts used during reasoning --- such as `goal,' `strategy,' `achievement,' `failure,' etc. ---relate to one another. Crucially, the theory of tactical proof is a very general meta-theoretic framework that subsumes specific procedures 
such as focusing, uniform proof-search, resolution, matrix methods, and so on. It is tactical proof that delivers systematically the various proof-assistants mentioned above.

In this paper, we use  the theory of  tactical proof to give a general account of the relationship between arguments --- that is, abstract entities representing reasoning --- and inference in a logic. In so doing, we provide a semantic framework for logic in practical reasoning problems, as witnessed by tactical proof. Importantly, we do not say that this framework is how one \emph{should} go about using logic as a mathematics of reasoning, but rather we aim to \emph{describe} how logic is typically used in the literature. We justify the framework by some coherence theorems relating arguments, reductive reasoning (witnessed by tactics), and inference, and then give a series of examples where the framework is implicit in the literature. While tactical proof connects inference and arguments, the semantics of reasoning is given in the paradigm of \emph{proof-theoretic semantics} (P-tS).

Introduced by Dummett~\cite{dummett1991logical} (based on results by Prawitz~\cite{prawitz1965}) and developed by Schroeder-Heister~\cite{schroeder2006validity,SEP-PtS}, proof-theoretic validity is a semantics of derivations (i.e., objects constructed inductively using rules of a formal system)  in Gentzen's \calculus{NJ}~\cite{Gentzen}. A derivation in \calculus{NJ} is (proof-theoretically) valid if and only if every closure ---  the result of substituting its open assumption for proofs --- of it reduces \emph{\'a la} Prawitz~\cite{prawitz1965} to a normal proof. The priority of normal proofs is justified according to the principle that the introduction rules of Gentzen's \calculus{NJ} act as definitions of the logical constants they introduce --- see Schroeder-Heister~\cite{SEP-PtS} for a discussion. The details are given in Section~\ref{sec:ptv-IPL}. In this paper, we generalize the notion of proof-theoretic validity to accommodate the full potential of the theory of tactical proof.

The philosophy on which P-tS is based is  \emph{inferentialism}, the view that rules of inference 
confer meaning to expressions --- see Brandom~\cite{Brandom2000}). This background renders P-tS the appropriate way (i.e., in contrast to, for example, model-theoretic semantics) for understanding reductive logic and reasoning because it ensures that the meaning of a reduction depends on the accepted use within the activity of reasoning.

The paper begins in Section~\ref{sec:logic} with a terse but complete introduction to the setup of logic needed in this this paper. It continues with a definition of proof-theoretic validity for intuitionistic propositional logic, which motivates later work. Following this, we present the theory of tactical proof and discuss its application to logic. In Section~\ref{sec:semantics-of-proofs}, we briefly explain what semantics of proofs have so far achieved and the way in which they are limited for understanding the use of logic as a reasoning technology. In Section~\ref{sec:main}, we propose a semantic framework that consider the entire space of reductions, which proceeds through a general account of proof-theoretic validity. We justify the framework by a correctness theorem and through a series of examples from the literature in which it is implicit. The paper concludes in Section~\ref{sec:conclusion} with a summary of results.

\section{Logic as Consequence} \label{sec:logic}
In this paper, we study logics generally, which is to say that instead of considering a specific logic, we develop a general framework uniformly applicable across logics. This requires us first to define what we mean by \emph{a logic}. Doing this is controversial. We choose to err on the side of being over-encompassing, so almost any logic in the literature is readily captured. 

Moreover, we also require a notion of inference. There are many paradigms we may consider, such as \emph{inter alia}, natural deduction systems, axiomatic systems (i.e., Hilbert calculi), and tableaux systems. The sequent calculus format is sufficiently expressive to capture all these and is, therefore, the paradigm used in this paper. 

\subsection{Consequence} \label{sec:logic:consequence}

A logic is captured by a relation called \emph{consequence} over data structures called \emph{sequents}. We will insist on nothing about these structures, and we do not impose any particular properties of the consequence relation.  Additional restrictions are often helpful in practice but are not required for the general framework presented herein. 

Fix a set $\setSequent$ of data-structures called \emph{sequents}. 

\begin{Definition}[Consequence] \label{def:consequence}
     A consequence relation is a predicate on a set of sequents. 
\end{Definition}

We think of consequence as a judgement asserting that a certain sequent is valid in the logic. Typically, such as in Example~\ref{ex:ipl:sequents}, sequents can be thought of as tuples of data structures so that we may refer to consequence as a relation between these structures.

\begin{Example}[Intuitionistic Sequents] \label{ex:ipl:sequents}
    Fix a set of atomic propositions \setAtoms. The set of formulas $\setFormulas$ (over \setAtoms) is constructed by the following grammar:
    \[
    \phi ::= \rm p \in \setAtoms \mid \phi \lor \phi \mid \phi \land \phi \mid \phi \to \phi \mid \bot 
    \]
    An intuitionistic sequent is a pair $\Gamma \seq \phi$ in which $\Gamma$ is a list of formulas and $\phi$ is a formula. 
\end{Example}

We have presented a definition of consequence that is, perhaps, overly generous. Nonetheless, this definition means we readily capture any logic within the literature. Refining the definition is challenging because logics may have quite diverse data-structures, ranging from the familiar list/multiset/set contexts of classical and intuitionistic logics, to the structurally complex bunched contexts of relevance logics, and beyond, rendering refinements quite subtle in order to not be arbitrarily exclusive. The more substantial point is that we do not require such refinement. Hence, restricting our notion of consequence according to some ideological principle would be excessive and needless. Crucially, the definition of consequence coheres naturally with the level of generality offered by the theory of tactical proof (see Section~\ref{sec:tactics}), which is the framework we use in this paper to capture the deployment of logic as a reasoning technology.

It is all very well to have a logic. It remains to define what reasoning steps the logic supports. In this paper, what we mean by reasoning step is \emph{inference} in a calculus for consequence.

\subsection{Sequent Calculi} \label{sec:logic:sequentcalculus}

Taking the perspective on logic in Section~\ref{sec:logic}, an inference is the process of beginning with some sequents --- thought of as a putative consequence of a logic --- and ending with another sequent --- thought of as being entailed, according to our logic, by the original sequents. These inferences are understood as instances of rules. 

\begin{Definition}[Rule] \label{def:rule}
     A rule is a relation $\rn{r}$ on sequents.
\end{Definition}

That $\rn{r}(s,s_1,\ldots ,s_n)$ obtains may be denoted by inference schemas:
\[
%\infer{s}{s_1 & \ldots & s_n}
\dfrac{s_1 \ldots s_n}{s} \, r
\]
The sequent $s$ is said to be \emph{conclusion} and the sequents $s_1,\ldots ,s_n$ the \emph{premisses}. A rule that has no premisses (i.e., a predicate on sequents) is called an \emph{axiom}-rule.

With this notion of rule, we give the standard treatment of sequent calculi proofs --- see, for example, Troestra and Schwichtenberg~\cite{troelstra2000basic}.

\begin{Definition}[Calculus] \label{def:calculus}
     A calculus is a set of rules containing at least one axiom-rule.
\end{Definition}

\begin{Definition}[Proof] \label{def:proof}
 Let $\calculus{L}$ be a calculus. The set of $\calculus{L}$-proofs is the set of rooted trees of sequents inductively constructed as follows:
 
 \textsc{Base Case.} If there is an axiom-rule $\rn a \in \calculus{L}$ such that $\rn a(s)$, then the tree of just the node $s$ is an $\calculus{L}$-proof.
 
 \textsc{Induction Step.} If $\prf{P}_1,\ldots ,\prf{P}_n$ are $\calculus{L}$-proofs with roots $s_1,\ldots ,s_n$, respectively, and there is a sequent $s$ and a rule $\rn{r} \in \calculus{L}$ such that $\rn{r}(s,s_1,\ldots .,s_n)$ obtains, then the argument $\prf{P}$ with root $s$ and immediate sub-trees $\prf{P}_1,\ldots ,\prf{P}_n$ is a proof.
\end{Definition}

The notion of proof from a calculus induces a notion of provability from the calculus, which is a consequence relation: 

\begin{Definition}[Provability] \label{def:provability}
    A sequent is \calculus{L}-provable iff there is a \calculus{L}-proof that concludes it.
\end{Definition}

We say that a calculus $\calculus{L}$ \emph{characterizes} a logic when $\calculus{L}$-provability coincides with the consequence relation of the logic. To be precise, this coincidence has two directions: \emph{soundness} and \emph{completeness}. The calculus is \emph{sound} for the logic when it only proves the consequences of the logic, and it is \emph{complete} when it can prove all of the consequences of the logic. 

\begin{Definition}[Soundness and Completeness] \label{def:snc}
     Let $\proves$ be a consequence relation and $\calculus{L}$ be a calculus. 
     \begin{itemize}
         \item[-] Calculus $\calculus{L}$ is \emph{sound} for $\proves$ iff, for any $s \in \setSequent$, if $\proves_{\calculus{L}} s$, then $ \proves s$.
         \item[-] Calculus $\calculus{L}$ is \emph{complete} for $\proves$ iff, for any $s \in \setSequent$, if $\proves s$, then $ \proves_{\calculus{L}} s$.
     \end{itemize}
\end{Definition}

Of course, several different calculi may characterize a logic, some of which may differ substantially. One way to generate new calculi from old is by including rules that are conservative over provability. Such rules are said to be \emph{admissible} --- that is, $\rn{r}$ is admissible in $\calculus{L}$ iff $\proves_{\calculus{L}\cup\{\rn{r}\}}$ is sound with respect to $\proves_{\calculus{L}}$. More generally, a rule can be admissible for a logic when it preserves validity in the logic --- that is, that is, $\rn{r}$ is admissible for $\proves$ iff, for any $s, s_1,...,s_n \in \setSequent$, if $\rn{r}(s,s_1,...,s_n)$ obtains and $\proves s_1$, $\hdots$ , $\proves s_n$, then $\proves s$. Observe that if $\calculus{L}$ is sound for $\proves$, then it is necessarily the case that all the rules in $\calculus{L}$ are admissible for $\proves$. 

\begin{Example}[Example~\ref{ex:ipl:sequents} cont'd] \label{ex:ipl:lj}
    Calculus $\calculus{LJ}$ is given by the rules of Figure~\ref{fig:LJ}, in which $\Gamma$ and $\Gamma'$ are understood to be permutations of each other as lists. This calculus characterizes intuitionistic logic (IPL) --- that is, an intuitionistic sequent $\Gamma \seq \phi$ is a consequence in IPL iff it is $\calculus{LJ}$-provable. Gentzen~\cite{Gentzen} proved that the following rules is admissible for IPL:
    \[
    \infer[\rn{cut}]{\Delta, \Gamma \seq \chi}{\Delta \seq \phi & \phi, \Gamma \seq \chi}
    \]
\end{Example}

\begin{figure}[t]
    \fbox{
	\begin{minipage}{.95\textwidth}
		\centering 
		\vspace{0.2cm}
		$
		\infer[\lrn{\weak}]{\phi,\Gamma \seq \Delta}{\Gamma\seq\Delta}
		\quad
		\infer[\rrn{\weak}]{\Gamma \seq \phi}{\Gamma\seq\emptyset}
		\quad
		\infer[\lrn{\cont}]{\phi,\Gamma \seq \Delta}{\phi,\phi,\Gamma\seq\Delta}
		\quad
		\infer[\exch]{\Gamma \seq \Delta}{\Gamma'\seq \Delta'}
		$
		\\[1.5ex]
		$
		\infer[\rrn\land]{\Gamma \seq \phi \land \psi}{\Gamma \seq \phi & \Gamma\seq\psi}
		\quad
		\infer[\lrn{\land^1}]{\phi \land  \psi,\Gamma\seq \Delta}{\phi,\Gamma\seq\Delta}
		\quad
		\infer[\lrn{\land^2}]{\phi\land \psi,\Gamma\seq \Delta}{\psi,\Gamma\seq\Delta}
		\quad
				\infer[\rrn \neg]{\Gamma \seq \neg \phi}{\phi,\Gamma \seq \emptyset}
		$
		\\[1.5ex]
		$
		\infer[\lrn{\lor}]{\phi\lor\psi,\Gamma \seq \Delta}{\phi,\Gamma \seq\Delta & \psi,\Gamma \seq\Delta }
	       \quad
		\infer[\rrn{\lor^1}]{\Gamma \seq \phi \lor  \psi}{\Gamma \seq \phi }
		\quad
		\infer[\rrn{\lor^2}]{\Gamma \seq \phi \lor  \psi}{\Gamma \seq\psi }
		\quad
				\infer[\lrn \neg]{\neg \phi , \Gamma \seq \emptyset}{\Gamma \seq \phi}
			$
		\\[1.5ex]
		$
		\quad
		\infer[\rrn \to]{\Gamma \seq \phi \to \psi}{\phi,\Gamma \seq \psi}
		\quad
		\infer[\lrn \to]{\phi \to \psi,\Gamma_1,\Gamma_2 \seq \Delta}{\Gamma_1 \seq \phi & \psi,\Gamma_2 \seq \Delta}
        \quad
		\infer[\ax]{\phi \seq \phi}{}
		$
    \end{minipage}
    }
    \caption{System \calculus{LJ}}
    \label{fig:LJ}
\end{figure}

This concludes a terse but complete account of a general perspective of logic and its proof-theoretic formulation as required in this paper. 

Our position is that the objects constructed in various paradigms of argumentation (e.g., natural deduction, dialogue games, etc.) are meaningful precisely to the extent that they represent reasoning in a logic. In this paper, we take `reasoning in a logic' that to mean proof in a sequent calculus characterizing a logic. For example, consider the tree-like constructions from Gentzen's \calculus{NJ} (see Section~\ref{sec:ptv-IPL}), which are supposed to represent intuitionistic reasoning --- indeed, we may say that they are a \emph{natural} representation of reasoning (see, for example, Tennant~\cite{Tennant78entailment}). We take the view that what makes them valid in intuitionistic logic is their relationship to intuitionistic consequence, as captured by the relationship to Gentzen's \calculus{LJ}. We use the theory of tactical proof to make this idea of the relationship between reasoning and logic precise. 

\section{Proof-theoretic Validity for Intuitionistic Propositional Logic} \label{sec:ptv-IPL}

Central to this paper is the paradigm of \emph{proof-theoretic semantics} (P-tS). We defer an explanation of the relevance of P-tS to the problem of this paper to Section~\ref{sec:semantics-of-proofs}, and presently give an account of it for intuitionistic propositional logic (IPL). This will facilitate the more general discussion  of the subject below and motivate later work.

\subsubsection{Intuitionistic Propositional Logic} \label{sec:ptv-IPL:IPL}

We begin by fixing the terminology for IPL as used in this paper; this extends the setup of Example~\ref{ex:ipl:sequents} and Example~\ref{ex:ipl:lj}.

\begin{Definition}[IPL Formula]
     Fix a set of atomic propositions \setAtoms. The set of IPL formulas $\setFormulas$ (over \setAtoms) is constructed by the following grammar:
    \[
    \phi ::= \at{p} \in \setAtoms \mid \phi \lor \phi \mid \phi \land \phi \mid \phi \to \phi \mid \bot 
    \]
\end{Definition}

We use the following abbreviations:
\[
\hat{\Gamma} := \bigwedge_{\phi \in \Gamma} \phi \qquad \neg \phi := \phi \to \bot
\]

\begin{Definition}[IPL Sequent]
     An IPL sequent is a pair $\Gamma \seq\phi$ in which $\Gamma$ is a set of formulas and $\phi$ is a formula.
\end{Definition}

We will characterize the consequence relation for IPL by proof in Gentzen's \calculus{NJ}~\cite{Gentzen}. To this end, we introduce some terminology for the paradigm of argument known as \emph{natural deduction}.

\begin{Definition}[Natural Deduction Argument]
    A natural deduction argument is a rooted tree of formulas in which some (possibly no) leaves are marked as discharged.

    An argument is open if it has undischarged assumptions; otherwise, it is closed.
\end{Definition}

In this section we will simply say \emph{argument} to mean natural deduction argument. The leaves of an argument are its \emph{assumptions}, the root is its \emph{conclusion}. That $\argument{A}$ has open assumptions $\Gamma$, closed assumptions $\Delta$, and conclusion $\phi$ may be denoted as follows:
\[
\deduce{\phi}{\argument{A}} \qquad \deduce{\argument{A}}{\Gamma,[\Delta]} \qquad \deduce{\phi}{\deduce{\argument{A}}{\Gamma,[\Delta]}}
\]

Intuitively, an argument $\argument{A}$ with open assumptions $\Gamma$ and conclusion $\phi$, is an argument \emph{for} the sequent $\Gamma \seq\phi$. We call this sequent the consequence of the argument. We use a system of natural deduction to define a class of arguments whose consequence precisely coincide with consequences of IPL. 

A natural deduction system is given by rules governing the composition of arguments. We follow the standard treatments
(see, for example, van Dalen~\cite{vanDalen} or Troelstra and Schwichtenberg~\cite{troelstra2000basic}), doubtless already familiar.  

\begin{Definition}[Natural Deduction System \calculus{NJ}]
    The natural deduction system $\calculus{NJ}$ is composed of the rules in Figure~\ref{fig:nj}.
\end{Definition}

\begin{figure}[t]
   \fbox{
   \begin{minipage}{0.95\linewidth}
   \centering
   $
       \infer[\irn{\land}]{\phi \land \psi}{\phi & \psi} \qquad \infer[\ern{\land^1}]{\phi}{\phi \land \psi} \quad \infer[]{\psi}{\phi \land \psi}
       \qquad \infer[\irn{\to}]{\phi \to \psi}{\deduce{\phi}{[\psi]}} 
       \qquad
       \infer[\ern{\bot}]{\phi}{\bot} 
    $
    \\[1.5ex]
    $
    \infer[\irn{\lor^1}]{\phi \lor \psi}{\phi} 
    \quad 
    \infer[\irn{\lor^2}]{\phi \lor \psi}{\psi} 
    \qquad
    \infer[\ern{\lor}]{\chi}{\phi \lor \psi & \deduce{\chi}{[\phi]} & \deduce{\chi}{[\psi]}}
    \qquad
    \infer[\ern{\to}]{\phi}{\phi & \phi \to \psi} 
    $
   \end{minipage}
   }
    \caption{Calculus $\calculus{NJ}$}
    \label{fig:nj}
\end{figure}

\begin{Definition}[\calculus{NJ}-Derivation]
    The set of \calculus{NJ}-derivations is defined inductively as follows:
    
    \textsc{Base Case.} If $\phi$ is a formula, then the one element tree $\phi$ is a \calculus{NJ}-derivation.
    
    \textsc{Inductive Step.} Let $\rn{r}$ be a rule in $\calculus{NJ}$ and $\argument{D}_1,...,\argument{D}_n$ be \calculus{NJ}-derivations. If $\mathcal{D}$ is an argument arising from applying $\rn{r}$ to $\mathcal{D}_1,...,\mathcal{D}_n$, then $\mathcal{D}$ is an \calculus{NJ}-derivation.
\end{Definition}

A \emph{closed \calculus{NJ}-derivation} in a calculus is an \calculus{NJ}-\emph{proof}. The existence of such proofs characterizes IPL.

\begin{Proposition}[Gentzen~\cite{Gentzen}]
\label{prop:nj}
     There is an \calculus{NJ}-derivation of $\Gamma \seq \phi$ iff $\Gamma \proves \phi$.
\end{Proposition}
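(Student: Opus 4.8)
The plan is to exploit the fact, recorded in Example~\ref{ex:ipl:lj}, that $\calculus{LJ}$ characterises IPL: a sequent $\Gamma \seq \phi$ is a consequence of IPL, i.e.\ $\Gamma \proves \phi$, exactly when it is $\calculus{LJ}$-provable. Thus it suffices to show that $\Gamma \seq \phi$ has an $\calculus{NJ}$-derivation iff it has an $\calculus{LJ}$-proof, and I would prove this by giving effective translations in both directions, each by induction on the height of the given derivation. Along the way the structural discrepancy between the two calculi --- $\calculus{NJ}$ manages its open assumptions as a set closed under discharge, whereas $\calculus{LJ}$ manipulates the antecedent $\Gamma$ explicitly through $\weak$, $\cont$, and $\exch$ --- is absorbed by the structural rules of $\calculus{LJ}$ on one side and by operations on the assumption set on the other.

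For the forward direction I would translate an $\calculus{NJ}$-derivation of $\Gamma \seq \phi$ into an $\calculus{LJ}$-proof of the same sequent. The assumption (leaf) case maps to $\ax$ followed by weakenings. Each introduction rule of $\calculus{NJ}$ corresponds directly to the matching right rule of $\calculus{LJ}$: $\irn{\land}$ to $\rrn{\land}$, $\irn{\lor^1}$ and $\irn{\lor^2}$ to $\rrn{\lor^1}$ and $\rrn{\lor^2}$, and $\irn{\to}$ to $\rrn{\to}$, with discharge of an assumption $\psi$ modelled by that formula appearing in the antecedent and being consumed by $\rrn{\to}$. The elimination rules are the delicate case, since their conclusions are not in the shape produced by an $\calculus{LJ}$ left rule; the standard device is to translate them using the corresponding left rule together with a $\cut$ on the major-premiss formula. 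For instance, $\ern{\to}$ on premisses giving $\Gamma \seq \phi$ and $\Gamma \seq \phi \to \psi$ is simulated by cutting the translated derivations against $\lrn{\to}$ and $\ax$; $\ern{\lor}$ is simulated by $\lrn{\lor}$ under a cut, and each $\ern{\land^i}$ by the corresponding $\lrn{\land^i}$ under a cut. Because $\cut$ is admissible in $\calculus{LJ}$ (Example~\ref{ex:ipl:lj}), every $\cut$ introduced by the translation can be eliminated, yielding a genuine $\calculus{LJ}$-proof of $\Gamma \seq \phi$.

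For the converse I would translate an $\calculus{LJ}$-proof of $\Gamma \seq \phi$ into an $\calculus{NJ}$-derivation whose open assumptions form a subset of $\Gamma$ and whose conclusion is $\phi$. Here $\ax$ becomes the one-node assumption derivation; the structural rules are interpreted on assumptions ($\weak$ adds an unused open assumption, $\cont$ identifies two occurrences, and $\exch$ is invisible since $\calculus{NJ}$ keeps a set); each right rule becomes the matching introduction rule; and each left rule is simulated by plugging its translated premisses into the appropriate elimination rule applied to the relevant assumption, discharging exactly where the $\calculus{LJ}$ rule discharges. If one works with the cut-ful presentation, the $\cut$ rule is simulated by a substitution lemma: given an $\calculus{NJ}$-derivation of $\phi$ from $\Delta$ and one of $\chi$ from $\phi,\Gamma$, grafting the first onto every open occurrence of the assumption $\phi$ in the second produces an $\calculus{NJ}$-derivation of $\chi$ from $\Delta,\Gamma$. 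Composing the two translations gives both implications, establishing the biconditional.

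The main obstacle is exactly the mismatch in orientation of the elimination rules: unlike the introduction/right rules, the $\calculus{NJ}$ eliminations do not line up with any single $\calculus{LJ}$ left rule, so the forward translation is forced to use $\cut$, and the whole argument therefore rests on the admissibility of $\cut$ in $\calculus{LJ}$. Dually, the substitution (grafting) lemma for $\calculus{NJ}$ --- the statement that substituting a derivation for an open assumption again yields a derivation with the expected open assumptions --- is the $\calculus{NJ}$-side counterpart of cut and must be verified by its own induction; checking that discharge is handled correctly under this substitution, so that no assumption is unintentionally captured or released, is the place where care is needed.
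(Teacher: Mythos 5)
The paper offers no proof of this proposition at all: it is stated as a citation of Gentzen's classical result, so there is no in-paper argument to compare against step by step. Your reconstruction is the standard one --- indeed essentially Gentzen's own --- and it is correct in outline: reduce via Example~\ref{ex:ipl:lj} to the equivalence of $\calculus{NJ}$-derivability and $\calculus{LJ}$-provability, translate introduction rules to right rules and elimination rules to left rules under cuts (discharged by cut admissibility), and conversely translate right rules to introductions and left rules to eliminations by grafting derivations onto open assumptions, with a substitution lemma playing the role of cut. Two points deserve care if this were written out in full. First, the backward direction needs no appeal to cut or to the substitution lemma at all: the $\calculus{LJ}$ of Figure~\ref{fig:LJ} does not contain $\cut$ as a rule (the paper presents it only as admissible), so every $\calculus{LJ}$-proof is already cut-free and your hedge about ``the cut-ful presentation'' can simply be dropped. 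Second, there is a language mismatch that the paper itself glosses over and that your translation tables inherit: the $\calculus{NJ}$ of Figure~\ref{fig:nj} treats $\bot$ as primitive (with $\ern{\bot}$) and defines $\neg\phi := \phi \to \bot$, whereas the $\calculus{LJ}$ of Figure~\ref{fig:LJ} has primitive $\neg$-rules, empty succedents, and no $\bot$-rule; your case analysis covers neither $\ern{\bot}$ in the forward direction nor $\rrn{\neg}$ and $\lrn{\neg}$ in the backward direction, so a complete proof must first fix a common treatment (for instance, reading $\Gamma \seq \emptyset$ as $\Gamma \seq \bot$ and unfolding $\neg\phi$ as $\phi \to \bot$) before the two inductions go through.
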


To give an account of proof-theoretic validity for IPL in the Dummett-Prawitz tradition, we require some auxiliary definitions.

The rules of $\calculus{NJ}$ with subscripts $\rn{I}$ and $\rn{E}$ are the \emph{introduction rules} ($I$-rules) and \emph{elimination rules} ($E$-rules), respectively.

\begin{Definition}[Detour]
    A \emph{detour} in a derivation is a sub-derivation in which a formula is obtained by an $I$-rule and is then the major premise of the corresponding $E$-rule.
\end{Definition} 
\begin{Example}
The following is a detour for conjunction:
\[
\infer[\ern{\to}]{\psi}{
    \deduce{\phi}{\mathcal{D}_1} & 
    \infer[\irn{\to}]{\phi \to \psi}
        {
            \deduce{\psi}{\deduce{\mathcal{D}_2}{[\phi]}}
        }
    }
\]
\end{Example}
\begin{Definition}[Canonical Derivation]
    A derivation is canonical iff it contains no detours.
\end{Definition}

Prawitz~\cite{prawitz1965} proved that canonical \calculus{NJ}-proofs are complete for IPL. The argument uses a reduction relation $\rightsquigarrow$ that precisely eliminates detours; for example, detours with implication are reduced as follows:
\[
\infer[\ern{\to}]{\psi}{
    \deduce{\phi}{\mathcal{D}_1} & 
    \infer[\irn{\to}]{\phi \to \psi}
        {
            \deduce{\psi}{\deduce{\mathcal{D}_2}{[\phi]}}
        }
    }
    \qquad
    \raisebox{1em}{$\rightsquigarrow$}
    \qquad
    \deduce{\psi}{
         \deduce{\mathcal{D}_2}{
            \deduce{\phi}{\mathcal{D}_1}
            }
        }
\]
The reflexive and transitive closure of $\rightsquigarrow$ is denoted $\rightsquigarrow^\ast$. This reduction relation is normalizing and its normal forms are canonical derivations.

\begin{Proposition}[Prawitz~\cite{prawitz1965}] \label{prop:normal}
     There is a canonical \calculus{NJ}-derivation of $\Gamma \seq \phi$ iff $\Gamma \proves \phi$.
\end{Proposition}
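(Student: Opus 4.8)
The plan is to prove the two implications separately, using Proposition~\ref{prop:nj} to pass between consequence and \calculus{NJ}-derivability, and the stated normalization property of $\rightsquigarrow$ to pass from an arbitrary derivation to a canonical one. The forward implication is immediate: a canonical derivation is by definition a detour-free \calculus{NJ}-derivation, hence in particular an \calculus{NJ}-derivation, so a canonical \calculus{NJ}-derivation of $\Gamma \seq \phi$ is an \calculus{NJ}-derivation of $\Gamma \seq \phi$, and Proposition~\ref{prop:nj} yields $\Gamma \proves \phi$.

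For the converse, suppose $\Gamma \proves \phi$. By Proposition~\ref{prop:nj} there is an \calculus{NJ}-derivation $\mathcal{D}$ with conclusion $\phi$ and open assumptions among $\Gamma$. Since $\rightsquigarrow$ is normalizing, there is a derivation $\mathcal{N}$ in normal form with $\mathcal{D} \rightsquigarrow^\ast \mathcal{N}$, and since the normal forms of $\rightsquigarrow$ are exactly the canonical derivations, $\mathcal{N}$ is canonical. The one remaining point is that $\mathcal{N}$ still derives $\Gamma \seq \phi$, which I would isolate as follows. \emph{Preservation Lemma:} if $\mathcal{D} \rightsquigarrow \mathcal{D}'$, then $\mathcal{D}'$ has the same conclusion as $\mathcal{D}$ and its set of open assumptions is contained in that of $\mathcal{D}$. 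Granting the lemma, a short induction along $\mathcal{D} \rightsquigarrow^\ast \mathcal{N}$ shows $\mathcal{N}$ has conclusion $\phi$ and open assumptions among $\Gamma$, so $\mathcal{N}$ is a canonical \calculus{NJ}-derivation of $\Gamma \seq \phi$, completing the converse.

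To prove the Preservation Lemma I would first reduce to the case in which the redex sits at the root: because $\rightsquigarrow$ is the contextual closure of the basic detour-conversions, replacing a subderivation by one with the same conclusion and no-larger open-assumption set leaves the conclusion of the whole unchanged and cannot enlarge its open assumptions. It then suffices to check each basic conversion, and detours occur only for $\land$, $\lor$, and $\to$ (there is no $\bot$-introduction, so $\ern{\bot}$ creates no detour). The $\land$-case is by inspection: the conversion selects one of the two immediate subderivations, preserving the matching conclusion and merely discarding the open assumptions of the other. The $\to$- and $\lor$-cases rest on a \emph{Substitution Lemma}: grafting a derivation $\mathcal{E}$ with conclusion $\chi$ and open assumptions $\Sigma$ onto the undischarged $\chi$-leaves of a derivation $\mathcal{F}$ yields a derivation whose conclusion is that of $\mathcal{F}$ and whose open assumptions are those of $\mathcal{F}$ with $\chi$ removed, together with $\Sigma$; when the discharge is vacuous no copy of $\mathcal{E}$ is grafted and $\Sigma$ may simply drop out, which is why containment rather than equality is the correct bookkeeping.

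The main obstacle is exactly this assumption-tracking through substitution in the $\to$- and $\lor$-conversions: one must keep discharged and undischarged occurrences straight, ensure the discharge brackets are repositioned correctly after grafting, and handle vacuous discharge. None of this is deep, but it is where all the care lies. By contrast, the genuinely hard analytic fact --- that $\rightsquigarrow$ is normalizing with canonical normal forms --- is precisely what the excerpt records as already established by Prawitz~\cite{prawitz1965}, so I would invoke it rather than reprove it.
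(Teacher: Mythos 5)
Your proof is correct and follows essentially the same route the paper takes: the paper states this result as Prawitz's, resting on exactly the two facts you invoke --- Proposition~\ref{prop:nj} for the passage between consequence and \calculus{NJ}-derivability, and the stated property that $\rightsquigarrow$ is normalizing with canonical normal forms. Your Preservation and Substitution lemmas (and the observation that reduction can only shrink the set of open assumptions, so containment in $\Gamma$ rather than equality is the right bookkeeping) merely make explicit the assumption-tracking that the paper delegates to the citation of Prawitz~\cite{prawitz1965}.
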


This establishes the relevant syntax and proof theory required for IPL in this paper. In following subsection, we present proof-theoretic validity for natural deduction.
 
\subsubsection{Proof-theoretic Validity} \label{sec:ptv-IPL:ptv}
 It was Dummett~\cite{dummett1991logical} who first realized the philosophical significance of the normalization result by Prawitz~\cite{prawitz1965} (i.e., Proposition~\ref{prop:normal}). This begins the development of proof-theoretic validity as a semantics of arguments. We follow the account given by Schroeder-Heister~\cite{schroeder2006validity}.

When introducing natural deduction, Gentzen~\cite{Gentzen} offered the following remarks:

 \begin{quote}
The introductions represent, as it were, the ‘definitions’ of the symbols concerned, and the eliminations are no more, in the final analysis, than the consequences of these definitions. This fact may be expressed as follows: In eliminating a symbol, we may use the formula with whose terminal symbol we are dealing only ‘in the sense afforded it by the introduction of that symbol’. 
\end{quote}

The following corollary of the normalization result (Proposition~\ref{prop:normal}) supports this view:
 
\begin{Corollary} \label{cor:intro}
    There is a canonical \calculus{NJ}-derivation of $\Gamma \seq \phi$ concluding with an introduction rule iff $\Gamma \proves \phi$.
\end{Corollary}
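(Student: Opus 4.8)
The plan is to read the biconditional as two implications and notice that only one of them carries any content. The left-to-right direction—that a canonical $\calculus{NJ}$-derivation of $\Gamma \seq \phi$ \emph{concluding with an introduction rule} witnesses $\Gamma \proves \phi$—is immediate: such a derivation is, a fortiori, a canonical derivation, so Proposition~\ref{prop:normal} already yields $\Gamma \proves \phi$. All the work is in the converse: from the bare hypothesis $\Gamma \proves \phi$ we must manufacture a canonical derivation whose \emph{final} rule is an introduction rule.

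For that direction I would first invoke Proposition~\ref{prop:normal} to obtain some canonical (detour-free) derivation $\mathcal{D}$ of $\Gamma \seq \phi$, and then analyse its shape via the standard form-of-normal-derivations theorem. Its content is that, reading any thread from top to bottom, one descends from a leaf through the \emph{major} premisses of a block of elimination rules (the analytic part), passes through a minimal formula, and then descends through a block of introduction rules (the synthetic part). Applied to the principal branch that terminates at the root $\phi$, this dichotomy says precisely that the last rule of $\mathcal{D}$ is an introduction rule \emph{unless} the synthetic part along the principal branch is empty—in which case the whole principal branch consists of elimination steps and its topmost formula is a leaf.

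The crux is therefore to rule out the case in which $\mathcal{D}$ ends with an elimination rule. The key observation here is that the topmost formula of a purely eliminative principal branch is an assumption that is never discharged: the discharging rules, namely $\irn{\to}$ and the minor premisses of $\ern{\lor}$, do not act on the major-premiss thread that constitutes this branch, and nothing lies below the root. Hence that assumption is \emph{open}, so it must lie in $\Gamma$; when no such open assumption is available—in particular for the closed case, where the conclusion $\phi$ can be neither atomic nor $\bot$ because neither is provable—the eliminative case is impossible, and the last rule must introduce the principal connective of $\phi$. I expect this main-branch analysis to be the main obstacle: one must pin down that an elimination-concluding normal derivation forces an undischarged leaf, and separately dispatch the $\ern{\bot}$ and $\ern{\lor}$ steps, which are eliminations that do not themselves introduce any connective. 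Everything else is bookkeeping layered on top of Proposition~\ref{prop:normal}.
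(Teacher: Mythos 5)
Your overall strategy---the trivial left-to-right direction via Proposition~\ref{prop:normal}, then for the converse extracting a canonical derivation and running the main-branch analysis to force the last rule to be an introduction---is exactly the standard way to flesh out this corollary, which the paper itself states without proof as an immediate consequence of Proposition~\ref{prop:normal}. Your discharge argument is also correct: the topmost formula of a purely eliminative main branch cannot be discharged by $\irn{\to}$ (no introductions occur below it on that branch) nor by $\ern{\lor}$ (discharge there happens only inside the minor-premiss subderivations, never on the major-premiss thread), so it stands as an open assumption and must belong to $\Gamma$.

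However, your own hedge---``when no such open assumption is available, in particular for the closed case''---is precisely where the proof and the statement part company, and a complete answer must say so outright: the corollary as printed, quantifying over arbitrary $\Gamma$, is \emph{false}, and no proof can close that gap. Take $\Gamma = \{\at{p}\}$ and $\phi = \at{p}$ for an atom $\at{p}$. Then $\at{p} \proves \at{p}$, but every introduction rule of $\calculus{NJ}$ concludes a compound formula, so no derivation whatsoever of $\at{p} \seq \at{p}$ can end with an introduction rule. What your argument actually establishes is the true statement the paper evidently intends (Dummett's ``fundamental assumption'' in the Schroeder-Heister tradition), namely the restriction to closed derivations: if $\proves \phi$ with $\Gamma = \emptyset$, then $\phi$ can be neither atomic nor $\bot$, and your main-branch analysis shows some closed canonical derivation of $\phi$ ends with an introduction rule. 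So the verdict is: your reasoning is sound and is the right proof of the corrected statement, but as a proof of the literal biconditional it stops exactly at the case your hedge leaves open; you should either restrict the claim to $\Gamma = \emptyset$ or exhibit the counterexample explicitly rather than leave it implicit in a qualifying clause.
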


Together, the above suggest an inferentialist semantics of arguments according to the following principals:
\begin{itemize}
    \item[-] canonical proofs are \emph{a priori} valid
    \item[-] closed derivations are valid as a consequence of them reducing to canonical proofs
    \item[-] open derivations are regarded as placeholders for closed derivations according to their possible closures.
\end{itemize}

What we are calling proof-theoretic validity is a specific realization of this semantic brief. To definite it, we must define what is meant by the \emph{closure} of a derivation. 

The idea is that we substitute open assumptions for closed derivations. However, an argument may contain open assumptions that are not consequences of IPL and thus (by Proposition~\ref{prop:nj}) do not admit \calculus{NJ}-proofs. How are such things closed? Therefore, to form the semantics, we consider validity in possible extensions of \calculus{NJ} that can handle such assumptions. 

If the open assumptions of an argument are complex formulas, then one may use rules of \calculus{NJ} to produce something simpler. What remains to consider is the case where the open assumptions are atomic propositions. In this case, their meaning is supplied by an atomic \emph{base}. 

\begin{Definition}[Base] \label{def:base}
    A base is a finite set of rules over atomic propositions,
    \[
    \infer{\at{ c}}{\at{ p}_1 & ... & \at{ p}_n}
    \]
    The premisses of the rule may be empty; that is, the rule may be an axiom.
\end{Definition}

The atomic rules are not necessarily closed under substitution. To see how they confer meaning to atomic propositions, consider the statement `Tammy is a vixen'. What does it mean? Intuitively, it means, somehow, that `Tammy is a vixen' \emph{and} `Tammy is a fox'. In the current setup, this is to say that we have base containing the following rules:
\[
{\small 
\begin{array}{c}
\infer{\text{Tammy is a vixen}}{\text{Tammy is a fox} & \text{Tammy is female}}\qquad
\infer{\text{Tammy is female}}{\text{Tammy is a vixen}} \qquad \infer{\text{Tammy is a fox}}{\text{Tammy is a vixen}} \\
\end{array}
}
\]
This is because the sense of `and' in the above heuristic is made precise by the inferential semantics 
 of $\land$, defined by the rules $\irn \land$, $\ern{\land^1}$, and $\ern{\land^2}$, which are emulated by the above rules.

\begin{Definition}[Derivation in a Base] \label{def:bvalid}
    Let $\base{B}$ be a base. The set of $\base{B}$-derivations is inductively defined as follows:
    
    \textsc{Base Case.} If $\at{p} \in \setAtoms$ is an atom, then the one element tree consisting of $\at{ p}$ is a $\base{B}$-derivation.
    
    \textsc{Inductive Step.} Let $\rn{r}$ be a rule in $\base{B}$ and $\argument{D}_1,...,\argument{D}_n$ be \base{B}-derivations. If $\mathcal{D}$ is an argument arising from applying $\rn{r}$ to $\mathcal{D}_1,...,\mathcal{D}_n$, then $\mathcal{D}$ is a \base{B}-derivation.
\end{Definition}

 Relative to a base (including the empty base) one has a simple proof-theoretic semantics for the formulas of IPL: a formula is valid (relative to the base) iff it admits a valid argument (relative to the base). 

\begin{Definition}[Validity Natural Deduciton Argument] \label{def:ptv:ipl}
    Let $\base{B}$ be a base. The $\base{B}$-valid of an argument $\argument{A}$ is inductively defined as follows:
    \begin{itemize}
        \item[-] $\argument{A}$ is a closed $\base{B}$-derivation
        \item[-] $\argument{A}$ is a closed canonical $\calculus{NJ}\cup\base{B}$-derivation whose immediate sub-derivations $\argument{A}_1, ... \argument{A}_n $ are $\base{B}$-valid
        \item[-] $\argument{A}$ is a closed non-canonical $\calculus{NJ}\cup\base{B}$-derivation that reduces to a $\base{B}$-valid canonical derivation $\argument{A}'$
        \item[-] $\argument{A}$ is an open derivation and, for every $\base{C} \supseteq \base{B}$, any extension of $\argument{A}$ by $\base{C}$-valid arguments of the assumptions $\argument{C}_1....,\argument{C}_n$ is a $\base{C}$-valid argument.
    \end{itemize}
    An argument is valid iff it is $\base{B}$-valid for every base $\base{B}$.
\end{Definition}

Schroeder-Heister~\cite{schroeder2006validity} has relativized this definition to the atomic system considered and to the set of justifications (proof reductions) considered. In Section~\ref{sec:main:ptv}, we go further and relativize it to the notion of argument too.

This concludes the definition of proof-theoretic validity for IPL. In the next section, we explain what it has to do with \emph{reasoning} in IPL and the r\^ole of tactics in relating natural deduction to consequence for the logic. 

Intuitively, the rules of $\calculus{NJ}$ are instructions for constructing arguments. When \emph{reasoning} in \calculus{NJ}, we begin with an argument for a putative conclusion and use the rules of \calculus{NJ} backwards to construct a closed \calculus{NJ}-derivation. What makes each reasoning step valid is that an admissible rules witness it. The relationship suggested here between \calculus{NJ}- and $\calculus{LJ}$-proofs justifying this is, of course, well-known --- see, for example, Gentzen~\cite{Gentzen}. It may be summarized as saying that constructions in $\calculus{NJ}$ are intuitionistic \emph{proofs} and constructions in $\calculus{LJ}$ are \emph{constructions of intuitionistic proofs}. We use the theory of tactical proof to make this idea precise.

\section{The Theory of Tactical Proof}  \label{sec:tactics}

The theory of tactical proof is a meta-theoretic framework supporting reductive logic and its mechanization. Indeed, tactical proof is sufficiently general to encompass diverse reasoning activities at various levels of formality, which may be as widely different as proving a theorem in a mathematical theory, seeking to win at chess, and arranging to meet a friend before noon on Saturday. The point is that all such reasoning activities, different in domain and formality, can be articulated in terms of a uniform language that a user may express insight into reasoning methods and delegate routine, but error-prone, work to a machine. In the words of Milner~\cite{milner1984tactics}:

\begin{quote}
    Here it is a matter of taste whether the human prover wishes to see this performance done by the machine, in all its frequently repulsive detail, or wishes only to see the highlights, or is merely content to let the machine announce the result (a theorem!).
\end{quote}

Following Milner~\cite{milner1984tactics}, we introduce the theory at the full level of generality at first and then concentrate on it in the context of proof-search in logics. 

\subsection{Tactics and Tacticals} \label{sec:tactics:milner}

One has two classes of prime entities: \emph{goals} and \emph{events}. The two classes are carried by the sets $\setGoals$ and $\setEvents$, respectively. The goals and events are related by a notion of \emph{achievement} $\alpha \, \subseteq \setGoals \times \setEvents$ that determines what events witness what goals. The idea is that an event $\event{E}$ achieves goal $G$ as it satisfies the description that the goal has designated.Heuristically, an event achieves a goal when it satisfies the description that the goal has designated. For example, the goal $G$ that Alice and Bob meet before noon on Saturday is achieved by the event $\event{E}$ is that Alice and Bob meet under the clock at Waterloo station at 11:53 on Saturday. 

We take reasoning about a goal as the process of replacing it with new goals that suffice to produce the original. In the nomenclature of reductive logic, such replacements are captured by \emph{reduction operators}, which may be taken as a partial function from goals to lists of goals:
\[
\rho:\setGoals \partialto \funcList{\setGoals}
\]
The goals produced by applying a reduction operator to a given goal are said to be \emph{subgoals}. 

What renders a reduction from a goal to a list of subgoals valid is that any events possibly witnessing the subgoals yield an event possibly witnessing the original goal. This justification is witnessed by a \emph{procedure},
\[
\pi:\funcList{\setEvents} \partialto \setEvents
\]
Returning to the example above concerning Alice and Bob, the goal $G$ may be reduced to the following sub-goals:
    \[
    \begin{array}{ccl}
    G_1 & :& \text{Alice arrives under the clock at Waterloo Station before noon on Saturday }\\
    G_2 & : & \text{Bob arrives under the clock at Waterloo Station before noon on Saturday.}
    \end{array}
    \]
This reduction is justified by the fact that $G_1$ and $G_2$ are achieved by the following events, respectively, which yield $\event{E}$ through the procedure of \emph{waiting}:
    \[
      \begin{array}{ccl}
    \event{E}_1 & : & \text{Alice arrives at Waterloo Station at 11:57 on Saturday }\\
    \event{E}_2 & : & \text{Bob arrives at Waterloo Station at 11:53 on Saturday.}
    \end{array}
    \]

Thus, one step of reasoning amounts to applying a (partial) mapping takings goals to subgoals together with a procedure,
\[
\tau: G \mapsto \langle [G_1, \ldots ,G_n], \pi \rangle 
\]
These mappings are called \emph{tactics}.
According to the above discussion, we have the following notion of validity:

\begin{Definition}[Valid Tactic] \label{def:validtactic}
 Let $\alpha$ be a notion of achievement. A tactic $\tau$ is $\alpha$-valid iff, for any $G, G_1, \ldots ,G_n \in \setGoals$ and $\event{E},\event{E}_1,\ldots ,\event{E}_n \in \setEvents$, if $\tau : G \mapsto \langle [G_1, \ldots ,G_n], \pi \rangle$ and $\event{E} := \pi(\event{E}_1, \ldots ,\event{E}_n)$, and $\event{E}_i \propto G_i$ obtains for $1 \leq i \leq n$, then $\event{E} \propto G$ obtains.
\end{Definition}

Of course, a goal typically requires several iterations of reasoning of the above form such that subgoals are resolved into further subgoals, and so on. For example, suppose Alice starts from Andover and Bob starts from Birmingham; then, to reason about $G$, one requires many component tactics that collectively bridge the distance both physical and temporal --- for instance, one may have the subgoal $G_1'$ for $G_1$ that `Bob takes the tube to Waterloo Station from Euston Station', which is witnessed by the event `Bob takes the 11:46 southbound Norther Line service from Euston to Waterloo on Saturday.' Hence, we require a notion of composition of tactics. 

A composition of tactics is called a \emph{tactical}.
A tactical is valid when it preserves the validity of the tactics it combines:

\begin{Definition}[Valid Tactical]
   A tactical is valid iff it preserves the validity of tactics; that is, if $\circ$ is a tactical and $\tau_1$,\ldots ,$\tau_n$ are $\propto$-valid, then $\circ(\tau_1,\ldots ,\tau_n)$ are $\propto$-valid.
\end{Definition}

The foregoing is a complete account of tactical reasoning as introduced by Milner~\cite{milner1984tactics}.  To be precise in the semantics presented in this paper, we supplement the above with some additional definitions. 

\begin{Definition}[Tactical System]
    A tactical system $\calculus{T}$ is a collection of tactics and tacticals that are valid relative to some notion of achievement.
\end{Definition}

We have opted to present the theory in its full generality. In the sequel, we apply it to the context of the use of logic as a reasoning technology. We follow the account in Milner~\cite{milner1984tactics}, which is the basis of many automated reasoning technologies using logic, such as the proof assistants mentioned in Section~\ref{sec:intro}.

    Let $\proves$  be the consequence relation for IPL --- see Section~\ref{sec:ptv-IPL}. We have the following setup:
    \begin{itemize}
        \item[-] a goal is a sequent $\Gamma \seq \phi$ in which $\Gamma$ is a list of formulas and $\phi$ is a formula
        \item[-] an event is a consequence $\Delta \proves \psi$
        \item[-] the achievement relation $\alpha$ is as follows:
        \[
         (\Delta \seq \psi) \propto (\Gamma \seq \phi) \qquad \text{ iff } \qquad \phi = \psi \text{ and } \Delta \sqsubseteq \Gamma \text{ and } \Delta \proves \psi
        \]
       (We write $\Delta \sqsubseteq \Gamma$ to denote that the set of elements in $\Delta$ is a subset of the set of elements of $\Gamma$)
    \end{itemize}
    In this context, a tactic is valid iff it corresponds to an admissible rule for IPL. For example, in $\calculus{NJ}$ the $\irn{\land}$-rule determines the tactic $\tau_{\irn{\land}}$ which has the following components:
    \[
    \underbrace{\infer[\Uparrow]{\Gamma \seq \phi \land \psi}{\Gamma \seq \phi &  \Gamma \seq \psi}}_{\text{reduction operator}} \qquad \underbrace{\infer[\Downarrow]{\Delta_1, \Delta_2 \proves \phi \land \psi}{\Delta_1 \proves \phi & \Delta_2 \proves \psi}}_{\text{procedure}}
    \]

This concludes the overview of the theory of tactical proof as used in this paper. What has not been given here is an account of exactly what \emph{argument} is constructed by a tactic, which we defer to Section~\ref{sec:main}.  

\subsection{Tactical Proof and Intuitionistic Propositional Logic} \label{sec:tactics:IPL}

Having presented the theory of tactics as a metatheoretical framework in which one studies reasoning --- the construction of arguments ---  in its full generality, it is informative to consider how it applies in the concrete setting of natural deduction for IPL.

    In Section~\ref{sec:tactics:milner}, we witnessed the following tactic corresponding to the $\irn \land$ rule:
    \[
     \tau_\land : (\Gamma \seq \phi \land \psi) \mapsto \langle [(\Gamma \seq\phi), (\Gamma \seq\psi)], \rrn \land \rangle
    \]
The analogous treatment of $\irn\to$ yields the following:
    \[
     \tau_\to : (\Gamma \seq\phi \to \psi) \mapsto \left\langle [(\phi,\Gamma \seq\psi)], \rrn \to \right\rangle
     \]
     These individual reasoning steps are combined with a \emph{tactical} $\fatsemi$ that corresponds to the sequential application of rules in natural deduction:
      \[
    \tau_\land\fatsemi\tau_\to: \big(\Gamma \seq\chi\land (\phi \to \psi) \big)\mapsto \langle [ (\Gamma \seq\chi), (\phi,\Gamma \seq\psi) ] , \rrn \land \otimes \rrn \to  \rangle
     \]
The procedure  $\rrn \land \otimes \rrn \to$ is the product of the procedures for $\rrn \land$ and $\rrn \to$.

 We have thus related natural deduction and consequence using tactics. And yet, something is missing in this setup. What argument does $\tau_\land\fatsemi\tau_\to$ witness? This question demands an interpretation of tactics as arguments, understood as abstract entities such as natural deduction arguments. 
 
 Moreover, the justification for proof-theoretic validity came from the idea that the introduction rules are definitional. That is all very well, but it isn't the only choice. Depending on various desiderata, we are left with various answers to the question, \emph{what is a valid argument?} By framing the relationship between arguments and consequence in the theory of tactical proof, we can define various notions of validity on arguments according to the priority, by fiat, of some sequent calculus characterizing consequence and some transformations of arguments. Furthermore, there is nothing about tactics that pertains, in particular, to natural deduction, so the notion of argument herein can be generalized to other paradigms too. These generalizations deliver the semantic framework for logic as a reasoning technology that this paper is about.

\subsection{Tactical Proof and Logic} \label{sec:tactics:logic}

The theory of tactical proof is an engine relating the search for arguments. It is, essentially, a system of reduction operators that are step-wise justified by rules in a sequent calculus. That is how we regard them in this paper. The application of tactics drives the computation of arguments, which is to say the \emph{search} for arguments, and a sequent calculus (in the sense of Section~\ref{sec:logic}) provides procedures. As such, supplying a sequent calculus amounts to supplying a notion of \emph{inference} against which the computation of an argument is justified. This is captured in the semantic framework of this paper in Section~\ref{sec:main}

The dichotomy between proof and search actually predates the tactical proof. The components were historically called \emph{analysis} and \emph{synthesis}, respectively --- see P\'olya~\cite{polya1945solve} for a general discussion of this study for mathematical practice. In \emph{analysis}, one repeatedly asks from what conditions could the desired result, which is to say goal, be obtained; during \emph{synthesis}, one derives from the analysis a solution to the problem. 

In this paper, the shift from analysis to synthesis (i.e., the shift from computing subgoals to using procedures, from reduction to deduction) is captured by a \emph{synthesizer}. 

\begin{Definition}[Synthesizer] \label{def:synthesizer}
Let $\calculus{L}$ be a sequent calculus and $\calculus{T}$ be tactical system with achievement $\alpha$ whose events are $\calculus{L}$-sequents. The achievement $\alpha$ is an $\calculus{L}$-synthesizer for $\calculus{T}$ iff the procedures of $\calculus{T}$ are the rules of $\calculus{L}$.
\end{Definition}

An example of a synthesizer is offered at the end of Section~\ref{sec:tactics:milner}. Here the reduction operators correspond to \calculus{NJ} rules, and the procedures correspond to $\calculus{LJ}$ rules.

In the same way that tactics are implicit in much of the literature on logic, synthesizers also appear implicitly anywhere one considers the inferential content within arguments in a certain space. The running case of natural deduction for IPL discussed above is a key example; we give some others in Section~\ref{sec:main:example}.

\section{The Semantics of Proofs} \label{sec:semantics-of-proofs}

On the one hand, we have structures called arguments that represent evidence for a consequence of a logic --- for example, natural deduction proofs \emph{\`a la} Gentzen~\cite{Gentzen}, \emph{\`a la} 
 Fitch~\cite{fitch1952symbolic}, \emph{\`a la} Lemmon~\cite{lemmon1978beginning},  and so on. On the other hand, we have a formal account of \emph{inference} supplied by fixing a sequent calculus for a logic. We use the theory of tactical proof introduced by Milner~\cite{milner1984tactics} to relate the two. We desire a framework able to explain these connections because reductive logic is central within the practice of logic as a mathematics of reasoning, especially within system-oriented science (e.g., in the use of logic for program and systems verification, natural language processing, and proof assistants). Hence, we require a semantics of proofs to determine \emph{valid} arguments and to consider their construction.

Of course, semantics of proofs for particular logics have been very substantially developed for particular logics. Intuitionism, as defined by Brouwer~\cite{brouwer1913intuitionism}, is the view that an argument is valid when it provides sufficient evidence for its conclusion. This defines intuitionistic logic (IL). A consequence is that IL differs from classical logic by rejecting \emph{tertium non datur}  --- that is, the ability to assert a proposition for the rejection of its negation --- as such an inference does not constitute sufficient evidence for its conclusion. Heyting~\cite{heyting1966intuitionism} and Kolmogorov~\cite{kolmogorov} provided a semantics for intuitionistic proof, which captures the evidential character of intuitionism, called the Brouwer-Heyting-Kolmogorov (BHK) interpretation of IL. It is now the standard explanation of the logic --- see, for example, van Atten~\cite{vanAtten}. Proof-theoretic validity (P-tV) for IPL (see Section~\ref{sec:ptv-IPL}) is, of course, strongly related to BHK --- see Schroeder-Heister~\cite{Schroeder2007modelvsproof}.

The \emph{propositions-as-types} correspondence --- see Howard~\cite{howard1980formulae} --- gives an standard way of instantiating the denotation of proofs in the BHK interpretation of intuitionistic propositional logic (IPL) as terms in the simply-typed $\lambda$-calculus. Technically, the setup can be sketched as follows: a judgement 
% \[
%      \phi_1 , \ldots , \phi_k  \vdash \Phi \,:\, \phi  
% \]
that $\Phi$ is an $\calculus{NJ}$-proof of the sequent $\phi_1 , \ldots , \phi_k \seq \phi$ corresponds to a typing 
judgement 
\[
    x_1 : A_1 , \ldots , x_k : A_k \vdash M(x_1,\ldots, x_k) : A 
\]
where the $A_i$s are types corresponding to the $\phi_i$s, the $x_i$s correspond to placeholders for proofs of the $\phi_i$s, the $\lambda$-term $M(x_1 , \ldots, x_k)$ corresponds to $\Phi$, and 
the type $A$ corresponds to $\phi$. 

Lambek~\cite{lambek1980lambda} gave a more abstract account by showing that simply-typed $\lambda$-calculus is the internal language of cartesian closed categories (CCCs), thereby giving a categorical semantics of proofs for IPL. In this setup, 
a morphism 
\[
    \llbracket \phi_1 \rrbracket \times \ldots \times \llbracket \phi_k \rrbracket 
        \stackrel{\llbracket \Phi \rrbracket}{\longrightarrow} 
            \llbracket \phi \rrbracket
\]
in a CCC (where $\times$ denotes cartesian product) that interprets the $\calculus{NJ}$-proof $\Phi$ of $\phi_1 , \ldots , \phi_k \seq \phi$ also interprets the 
term $M$, where the $\llbracket \phi_i \rrbracket$s interpret also the $A_i$s and $\llbracket \phi \rrbracket$ also interprets 
$A$. 

To generalize to full IL (and beyond), Seely~\cite{seely1983hyperdoctrines} modified this categorical setup and introduced \emph{hyperdoctrines} --- indexed categories of CCCs with coproducts over a base with finite products. Martin-L\"of~\cite{martin1975intuitionistic} gave a formulae-as-types correspondence for predicate logic using dependent type theory. Barendregt~\cite{Barendregt1991} gave a systematic treatment of type systems and the propositions-as-types correspondence. A categorical treatment of dependent types came with Cartmell~\cite{cartmell} --- see also, for examples among many, work by Streicher~\cite{Streicher1988}, Pavlovi\'c~\cite{Pavlovic1990}, Jacobs~\cite{Jacobs}, and Hofmann~\cite{Hofmann1997}. In total, this gives a semantic account of \emph{proof} for first- and higher-order predicate intuitionistic logic based on the BHK interpretation. 

That is all very well as explaining what a proof is for IL, but the space of objects considered when finding an argument also contains things that are not proofs and cannot be continued to form proofs. Constructed by backward inference, we call these objects \emph{reductions}. In IPL, an example of a reduction that fails to be a proof is an \calculus{NJ}-derivation whose open assumptions are not theorems of IPL. While such an argument is well-constructed according to intuitionism, it is not valid since it is not closed and cannot be closed (i.e., the open assumptions cannot be substituted for proofs). We desire a semantics of proofs able to treat the entire space of reductions as the latter are central to the use of logic within practical reasoning. In Section~\ref{sec:ptv-IPL}, we saw that proof-theoretic semantics could account for such arguments, which are beyond the scope of BHK. Pym and Ritter~\cite{pym2004reductive} have provided a general semantics of reductive logic in the context of classical and intuitionistic logic through polynomial categories; that is, by extending the categories in which arrows denote proofs for a logic by additional arrows that supply `proofs' for propositions that do not have proofs but appear during reduction.

This paper aims to provide a uniform framework for describing the relationship between argument, reduction, and inference. Taking the view that what makes an argument valid for a logic is that it respects a notion of inference for that logic, we subscribe to \emph{inferentialism} (see Brandom~\cite{Brandom2000}). We generalize proof-theoretic validity such that we can handle an arbitrary notion of argument (i.e., not just natural deduction \emph{\`a la} Gentzen~\cite{Gentzen}) for an arbitrary logic (i.e., not just IPL). This is the subject of Section~\ref{sec:main}.

\section{Proof-theoretic Semantics and Tactical Proof} \label{sec:main}

In this section, we address the problems raised in Section~\ref{sec:semantics-of-proofs} and present a uniform framework relating arguments, tactical proof (reductive reasoning), and sequent calculi (inference) based on inferentialism. We do this through a generalized view of proof-theoretic validity (see Section~\ref{sec:ptv-IPL}), using the theory of tactical proof as the engine of computation through which inference and argument are related. This work is intended to be descriptive in that it is a uniform metatheoretical platform witnessing how these things are typically related in the literature.

\subsection{Proof-theoretic Validity, generalized} \label{sec:main:ptv}

We begin with a space of \emph{arguments}  $\setStyle{A}$. Within this space, there is a subset $\setStyle{P}\subseteq \setStyle{A}$ arguments that are \emph{a priori} valid; these are called \emph{canonical proofs}. These canonical proofs for the basis on which the validity of all the other arguments is derived. 

A given argument $\argument{A} \in \setStyle{A}$ may represent another argument $\argument{A}' \in \setStyle{A}$ in some way. For example, in the setting of natural deduction (Section~\ref{sec:ptv-IPL}), an argument containing a detour can be thought of as representing a natural deduction argument arising reduction \emph{\`a la} Prawtiz~\cite{prawitz1965}. Thus, we take the space of arguments to be equipped with some \emph{justification operators} of the form $j:\setStyle{A} \partialto \setStyle{A}$ that transform one argument to another. 

It may be that arguments are left \emph{open} in some sense. The idea is that the argument contains all the suasive content they require but have left something unsaid, which can be arbitrarily filled in. Returning to the case of natural deduction (Section~\ref{sec:ptv-IPL}), this was the state of open derivations, which have left the justification of their open assumptions unstated (by the very fact of them being open). Therefore, we further equip the space of arguments with \emph{closure operators} of the form $c: \setStyle{A} \partialto \setStyle{A}$, mapping arguments to arguments.

To summarize:

\begin{Definition}[Argument Space]
     An argument space is a tuple $\system{A} := \langle \setStyle{A},\setStyle{P},\mathcal{J},\mathcal{C} \rangle$ in which $\setStyle{A}$ is the set of arguments, $\setStyle{P}$ is the set of proofs, $\mathcal{J}$ is the set of justification operators $j:\setStyle{A} \partialto \setStyle{A}$, and $\mathcal{C}$ is the set of closure operators $c:\setStyle{A} \partialto \setStyle{A}$. 
\end{Definition}

A notion of proof-theoretic validity precisely analogous to the treatment of IPL in Section~\ref{sec:ptv-IPL} follows immediately:

\begin{Definition}[Proof-theoretic Validity] \label{def:ptv:valid}
     Let $\system{A} := \langle \setStyle{A},\setStyle{P},\mathcal{J},\mathcal{C}\rangle$ be an argument space. An argument $\argument{A}$ is $\system{A}$-valid iff one of the following holds:
     \begin{itemize}
         \item[-] it is a canonical proof ---  $\argument{A} \in \setStyle{P}$;
         \item[-] there is $j \in \mathcal{J}$ such that $j(\argument{A})$ is $\system{A}$-valid;
         \item[-] for any $c \in \mathcal{C}$, the closure $c(\argument{A})$ is \system{A}-valid.
     \end{itemize}
\end{Definition}
\begin{Example}[Proof-theoretic Validity for  IPL] \label{ex:naturalPtS}
  Consider the arguments space $\system{N} := \langle \setStyle{A},\setStyle{D},\setStyle{P},\mathcal{J},\mathcal{C} \rangle$ in which the components are as follows:
    \begin{itemize}
        \item[\setStyle{A} - ] Comprises natural deduction arguments
        \item[\setStyle{P} - ] Comprises canonical $\calculus{NJ}$-proofs
        \item[$\mathcal{J}$ - ] Comprises the reduction transformations by Prawitz~\cite{prawitz1965}
        \item[$\mathcal{C}$ - ] Comprises maps that substitute open assumptions for derivation in a base. 
    \end{itemize}
    The validity  condition from Definition~\ref{def:ptv:valid} instantiated to $\system{N}$ is precisely Definition~\ref{def:ptv:ipl}.
 \end{Example}

Of course, the point of the generalization of proof-theoretic validity in this section is that other examples may be captured too. 
    
\begin{Example}[Proof-search Games] \label{ex:games}
    Consider the game-semantics of proof-search for IPL by Pym and Ritter~\cite{pym2005games}; see also work by Miller and Saurin \cite{miller2006game}. Succinctly, a (partial) strategy is a (partial) function that extends plays --- sequences of moves  --- that end on an opponent move, which satisfy certain conditions. Each strategy represents an attempt at proof-search in \calculus{LJ}. A \emph{winning} strategy (i.e., a strategy satisfying certain conditions) represent a successful proof-search --- that is, proof-search that actually finds an \calculus{LJ}-proof --- but it may include backtracking. We have the argument space $\system{S} = \langle \setStyle{A}, \setStyle{D}, \setStyle{P}, \mathcal{J}, \mathcal{C} \rangle$ in which the components are as follows: 
    \begin{itemize}
        \item[$\setStyle{A}$ - ] Arguments are partial strategies
        \item[$\setStyle{P}$ - ] Canonical proofs are winning strategies without backtracking
        \item[$\mathcal{J}$ - ] The justification operators collapse backtracking sections of strategies
        \item[$\mathcal{C}$ - ] The closure operators extends partial strategies to total strategies.
    \end{itemize}
    The validity condition from Definition~\ref{def:ptv:valid} instantiated to $\mathcal{L}$ renders a strategy valid when it represents an \calculus{LJ}-proof. 
\end{Example}

% Different argument spaces may be superficially different in that what they take for arguments, justification operators, etc., is completely analogous. For example, \calculus{NJ} can be presented using natural deduction arguments as in Section~\ref{sec:ptvIPL}, or it may be presented in sequent calculus form. This merits a notion of \emph{homormorphism} between these spaces. 

% \begin{Definition}[Argument Space Homormorphism] \label{def:homomorphism}
%      Let $\mathcal{L}_1 := \langle \setStyle{A}_1,\setStyle{D}_1,\setStyle{P}_1,\mathcal{J}_1,\mathcal{C}_1 \rangle$ and $\mathcal{L}_2 := \langle \setStyle{A}_2,\setStyle{D}_2,\setStyle{P}_2,\mathcal{J}_2,\mathcal{C}_2 \rangle$ be argument spaces. A function $h:\setStyle{A}_1+\mathcal{J}_1+\mathcal{C}_1 \to \setStyle{A}_2+\mathcal{J}_2+\mathcal{C}_2$ is a homomorphism from $\mathcal{L}_1$ to $\mathcal{L}_2$ iff, for any $\argument{A} \in \setStyle{A}$ and any $j \in \mathcal{J}_1$ and  any $c \in \mathcal{C}_1$, the following equations hold:
%      \[
%      h(j)(h(\argument{A})) = h(j(\argument{A})) \quad \text{ and } \quad  h(c)(h(\argument{A})) = h(c(\argument{A}))
%      \]
% \end{Definition}

Suppose one has a logic and a notion of argument for that logic; for example, IPL and natural deduction. The setup of argument spaces does not yet allow us to relate the two. As in the case-study of Section~\ref{sec:ptv-IPL}, we require a function that determines what sequents are witness by a certain argument in the space.

\begin{Definition}[Ergo]
      Let $ \system{A}$ be an argument space with arguments $\setStyle{A}$. An ergo is a map from arguments to sequents, $\ergo:\setStyle{A} \to \setStyle{S}$.
\end{Definition}
\begin{Definition}[Logical Argument Space]
     A logical argument space (LAS) is a pair $\mathcal{L}:=\langle \system{A}, \ergo \rangle$ in which $\system{A}$ is an argument space and $\ergo$ is an ergo.
\end{Definition}

Observe that it is the use of an ergo that turns proof-theoretic validity from a \emph{semantics of proofs} into a \emph{semantics in terms of proofs}. In Section~\ref{sec:ptv-IPL}, we saw that a natural deduction argument with open assumptions $\Gamma$ and conclusion $\phi$ has the consequence $\Gamma \seq \phi$; this describes an ergo. The notion of validity of arguments renders a LAS a characterize of \emph{some} logic; namely, the logic whose consequence relation consists of all those sequents admitting valid arguments. Given a LAS $\mathcal{L} = \langle \system{A}, \ergo \rangle$, we write $\proves_\mathcal{L}$ to denote the consequence relation of the logic it induces --- that is,
\[
\proves_\mathcal{L} s \qquad \mbox{iff} \qquad \mbox{there is an $\system{A}$-valid argument $\mathcal{A}$ such that $\ergo(\mathcal{A})=s$}
\]

Recall that we take logics to be characterized by sequent calculi, generally conceived (see Section~\ref{sec:logic}). The relationship between the proof-theoretic semantics and the logic is then captured by standard soundness and completeness conditions:  

\begin{Definition}[Soundness and Completeness of Sequent Calculi] \label{def:snc:calculus}
    Let $ \mathcal{L}$ be a LAS and let $\calculus{L}$ be a sequent calculus over sequents $\setStyle{S}$.
    \begin{enumerate}
        \item[-] The calculus $\calculus{L}$ is \emph{sound} for $\mathcal{L}$ iff, for any sequent $s \in \setStyle{S}$, if $\proves_\calculus{L} s$, then $\proves_\mathcal{L} s$.
        \item[-] The calculus $\calculus{L}$ is \emph{complete} for $\mathcal{L}$ iff for any sequent $s \in \setStyle{S}$, if $\proves_\mathcal{L} s$, then $\proves_\calculus{L} s$. 
    \end{enumerate}     
\end{Definition}

Indeed, Definition~\ref{def:snc:calculus} is just an instance of Definition~\ref{def:snc} by taking the logic in the latter as the one defined by the LAS.

This relates logics to argument spaces in general. It remains to relate tactics to argument spaces. This addresses the question at the end of Section~\ref{sec:tactics:IPL}, what argument does a tactic represent? We have an interpretation from a system of tactics to a space of arguments.

\begin{Definition}[Interpretation] \label{def:interpretation}
 Call a functions from $\setStyle{A}$ to $\textsc{LIST}(\setStyle{A})$ an \emph{abstract reduction operator} (ARO). 

An interpretation of  is a function $\llbracket - \rrbracket$ that maps goals to arguments, tactics to AROs, and tacticals to functions from AROs to AROs,  
such that $\tau: G \mapsto \langle [G_1,...,G_n], \pi \rangle$, then $\llbracket \tau \rrbracket (\llbracket G \rrbracket) \mapsto  [\llbracket G_1 \rrbracket,...,\llbracket G_n \rrbracket]$.
\end{Definition}
\begin{Example}
Let $\tau_\land$ and $\tau_\to$ and $\fatsemi$ be as in Section \ref{sec:tactics:IPL}. To this setup we add the interpretation $\llbracket - \rrbracket$ which answers the questions of what arguments are represented by what tactics.

The interpretation $\llbracket - \rrbracket$ acts on goals (i.e., IPL sequents) $\Gamma \seq\phi$ by mapping them to the natural deduction argument consisting of nodes of formulas from $\Gamma$ going directly to a node for $\phi$. It maps tactics to their actions on arguments: For example,
\[
\llbracket \tau_\to \rrbracket(\phi \to \psi) \mapsto\raisebox{-1ex}{\deduce{\psi}{\phi}}
\]
The tactical $\fatsemi$ interprets composition of rules, respecting also discharge. Thus, we have the overall composite action:
\[
\llbracket \tau_\land\fatsemi\tau_\to \rrbracket(\chi\land (\phi \to \psi)) = \raisebox{-3ex}{
 \infer{\chi\land (\phi \to \psi)}{\chi & \infer{\phi \to \psi}{\deduce{\psi}{[\phi]}}} }
\]
\end{Example}

This completes the framework of this paper. We have arguments and logics and tactics, which are pairwise connected in simple, intuitive ways, faithful to mathematical practices, but presented generally. In the next section, we demonstrate that validity, tactical proof, and consequence are coherent throughout the framework.

\subsection{The Correctness Theorems} \label{sec:main:correctness}

We have thus presented a tripartite framework for logic as a mathematics of reasoning: arguments, sequent calculi, and tactics. Above we defined their relationships. It is summarized in the following diagram:

\[
\xymatrix@R=0.5ex{
& \mathcal{L} &  & \textsc{Validity} \\
& & & \\
& & & \textsc{Soundness \& Completeness} \\ 
& & & \\
\calculus{T} \ar@{<..>}[uuuur]^{\llbracket-\rrbracket} \ar@{<->}[rr]_{\sigma} & & \calculus{L} \ar@{<..>}[uuuul]_{\ergo} & \textsc{Provability}
}
\]
Heuristically, tactics $\calculus{T}$ represent (through an interpretation $\llbracket-\rrbracket)$ the constructions of arguments $\mathcal{L}$ that assert (through an ergo $\ergo$) sequents of a logic and that the reasoning steps involved are justified (through a synthesizer $\sigma)$ by the rules of a sequent calculus $\calculus{L}$.

By fixing a sequent calculus, we declare a notion of inference for a logic. This notion of inference justifies a system of tactics if there is a synthesizer. The following coherence result captures this:

\begin{Theorem}\label{thm:coherence}
Let $\calculus{L}$ be a sequent calculus; let $\mathcal{L}=\langle \system{A}, \ergo \rangle$ be a LAS; and let $\calculus{T}$ be a tactical system with achievement $\alpha$. Let $\llbracket - \rrbracket$ be an interpretation of $\calculus{T}$ in $\system{A}$ and let $\propto$ satisfy the following coherence condition: 
\[
\ergo\llbracket G \rrbracket \propto G
\]
Let be $\alpha$ be an $\calculus{L}$-synthesizer for $\calculus{T}$, then the application of a tactic corresponds precisely to an inference in $\calculus{L}$ --- that is, if $\llbracket \tau \rrbracket:\llbracket G \rrbracket \mapsto [\llbracket G_1 \rrbracket ,...,\llbracket G_n \rrbracket]$, then there is an $\calculus{L}$-rule witnessing the following:
\[
\infer[\pi]{\ergo \llbracket G \rrbracket}{\ergo\llbracket G_1 \rrbracket & \hdots & \ergo\llbracket G_n \rrbracket }
\]
\end{Theorem}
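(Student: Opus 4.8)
The plan is to peel back the three layers of structure in the hypotheses --- interpretation, synthesizer, and achievement --- and observe that each was set up precisely so that the displayed inference falls out. First I would use the hypothesis $\llbracket \tau \rrbracket : \llbracket G \rrbracket \mapsto [\llbracket G_1 \rrbracket, \ldots, \llbracket G_n \rrbracket]$ together with Definition~\ref{def:interpretation} to recover the tactic generating this abstract reduction operator: there is a procedure $\pi$ with $\tau : G \mapsto \langle [G_1, \ldots, G_n], \pi \rangle$, since an interpretation sends exactly such a $\tau$ to the ARO exhibited. Next, because $\propto$ is an $\calculus{L}$-synthesizer for $\calculus{T}$ (Definition~\ref{def:synthesizer}), the procedures of $\calculus{T}$ are \emph{by definition} the rules of $\calculus{L}$; hence $\pi$ already \emph{is} the $\calculus{L}$-rule we must exhibit, and it only remains to check that, read as a rule, its premisses are $\ergo\llbracket G_1 \rrbracket, \ldots, \ergo\llbracket G_n \rrbracket$ and its conclusion is $\ergo\llbracket G \rrbracket$.

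For that step I would abbreviate $\event{E}_i := \ergo\llbracket G_i \rrbracket$ and $\event{E} := \ergo\llbracket G \rrbracket$; these are events, since an ergo lands in $\setStyle{S}$ and the synthesizer fixes the events of $\calculus{T}$ to be the $\calculus{L}$-sequents. The coherence hypothesis $\ergo\llbracket G \rrbracket \propto G$, applied at $G$ and at each $G_i$, yields $\event{E} \propto G$ and $\event{E}_i \propto G_i$. Since $\calculus{T}$ is a tactical system its tactics are $\propto$-valid, so $\tau$ is valid; instantiating Definition~\ref{def:validtactic} with the witnessing events $\event{E}_1, \ldots, \event{E}_n$ then tells us that whenever $\pi(\event{E}_1, \ldots, \event{E}_n)$ is defined it achieves $G$, which is the consistency check ensuring the two sides of the putative rule are compatible.

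The crux --- and the step I expect to be the main obstacle --- is identifying the event $\pi(\event{E}_1, \ldots, \event{E}_n)$ with $\event{E} = \ergo\llbracket G \rrbracket$, for only then is $\ergo\llbracket G \rrbracket$ literally the conclusion of the $\calculus{L}$-rule $\pi$ applied to the premisses $\event{E}_i$. Both $\pi(\event{E}_1, \ldots, \event{E}_n)$ and $\ergo\llbracket G \rrbracket$ achieve $G$, so validity alone only pins them down \emph{up to} achievement, not on the nose; the argument must therefore use that the ergo is chosen to agree with the procedures of $\calculus{T}$ on the goals they govern. Concretely, I would read $\pi$ through the synthesizer as the $\calculus{L}$-rule whose conclusion is the sequent asserted by $\llbracket G \rrbracket$, namely $\ergo\llbracket G \rrbracket$, and whose premisses are the sequents asserted by the $\llbracket G_i \rrbracket$, so that $\pi(\event{E}_1, \ldots, \event{E}_n) = \ergo\llbracket G \rrbracket$ holds by construction rather than merely by achievement. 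Discharging this identification --- rather than the routine unfolding that precedes it --- is where the real work lies, and it is exactly the content the coherence condition was introduced to supply; the displayed inference then follows immediately.
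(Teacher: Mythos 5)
Your proposal is correct and takes essentially the same route as the paper: the paper's proof likewise unfolds Definition~\ref{def:interpretation} to recover $\tau : G \mapsto \langle [G_1,\ldots,G_n], \pi \rangle$, reads the coherence condition as giving $\pi : [\ergo\llbracket G_1 \rrbracket, \ldots, \ergo\llbracket G_n \rrbracket] \mapsto \ergo\llbracket G \rrbracket$, and concludes via Definition~\ref{def:synthesizer} that $\pi$ is the required $\calculus{L}$-rule. The identification you single out as the crux --- getting $\pi(\ergo\llbracket G_1 \rrbracket, \ldots, \ergo\llbracket G_n \rrbracket)$ to equal $\ergo\llbracket G \rrbracket$ on the nose rather than merely up to achievement --- is exactly the step the paper discharges by fiat (``by the coherence condition''), i.e.\ the same by-construction reading you adopt, so your treatment is, if anything, the more explicit of the two.
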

\begin{proof}
    By Definition~\ref{def:interpretation}, if  $\tau:G \mapsto \langle [G_1,...,G_n], \pi \rangle$, then $\llbracket \tau \rrbracket:\llbracket G \rrbracket \mapsto [\llbracket G_1 \rrbracket ,...,\llbracket G_n \rrbracket]$. By the coherence condition: $\pi:[\ergo\llbracket G_1 \rrbracket,...,\ergo\llbracket G_n \rrbracket] \mapsto \ergo\llbracket G \rrbracket$. Since $\alpha$ is a synthesizer, the result follows from Definition~\ref{def:synthesizer}.
\end{proof}
\begin{Corollary}
    Calculus $\calculus{L}$ is sound for $\mathcal{L}$.
\end{Corollary}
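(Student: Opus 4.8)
The plan is to argue by induction on the structure of an $\calculus{L}$-proof, using Theorem~\ref{thm:coherence} to convert each rule application into a step in the construction of an argument via the interpretation $\llbracket - \rrbracket$. Suppose $\proves_\calculus{L} s$, witnessed by an $\calculus{L}$-proof $\prf{P}$; to conclude $\proves_\mathcal{L} s$ I must exhibit an $\system{A}$-valid argument $\argument{A}$ with $\ergo(\argument{A}) = s$. Since $\alpha$ is an $\calculus{L}$-synthesizer for $\calculus{T}$ (Definition~\ref{def:synthesizer}), every rule occurring in $\prf{P}$ is, by the synthesizer, the procedure of some tactic $\tau \in \calculus{T}$. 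Reading $\prf{P}$ root-upward thus decorates each node with a goal $G$ and a tactic $\tau : G \mapsto \langle [G_1,\ldots,G_n], \pi \rangle$ whose procedure $\pi$ is the rule applied there, with $\ergo\llbracket G \rrbracket = s$ at the root; the interpretation then assembles these into a single argument $\llbracket G \rrbracket$, and Theorem~\ref{thm:coherence} guarantees that its ergo is exactly $s$.

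Next I would run the induction with a hypothesis asserting both the ergo identity and $\system{A}$-validity simultaneously. For the base case, a leaf of $\prf{P}$ is an axiom-rule, which by the synthesizer is the procedure of an axiom-tactic $\tau : G \mapsto \langle [\,], \pi \rangle$, so $\llbracket \tau \rrbracket : \llbracket G \rrbracket \mapsto [\,]$ yields an argument with no outstanding subgoals; such an argument lies in $\setStyle{P}$ and is therefore $\system{A}$-valid by the first clause of Definition~\ref{def:ptv:valid}. For the inductive step, the concluding rule of $\prf{P}$ is the procedure $\pi$ of a tactic $\tau$, and by Definition~\ref{def:interpretation} the interpretation combines the subarguments $\llbracket G_1 \rrbracket,\ldots,\llbracket G_n \rrbracket$ — each $\system{A}$-valid by the inductive hypothesis — into $\llbracket G \rrbracket$. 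Theorem~\ref{thm:coherence} supplies the $\calculus{L}$-rule deriving $\ergo\llbracket G \rrbracket$ from $\ergo\llbracket G_1 \rrbracket, \ldots, \ergo\llbracket G_n \rrbracket$, so the ergo identity is preserved, and the corollary follows once $\system{A}$-validity of $\llbracket G \rrbracket$ is secured.

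The hard part will be exactly this last point: closing the gap between $\propto$-validity of tactics (Definition~\ref{def:validtactic}), phrased as preservation of achievement from subgoals to goal, and $\system{A}$-validity of arguments (Definition~\ref{def:ptv:valid}), phrased via membership in $\setStyle{P}$ and closure under the operators $\mathcal{J}$ and $\mathcal{C}$. The bridge is the coherence condition $\ergo\llbracket G \rrbracket \propto G$ together with validity of $\calculus{T}$: since $\tau$ is $\propto$-valid and each $\ergo\llbracket G_i \rrbracket \propto G_i$ holds by the inductive hypothesis, the procedure delivers $\ergo\llbracket G \rrbracket \propto G$, certifying that $\llbracket G \rrbracket$ discharges its goal. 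What remains — and is the genuinely delicate step — is to translate this achievement into one of the three structural clauses of Definition~\ref{def:ptv:valid}: either $\llbracket G \rrbracket$ is already canonical (when $\pi$ corresponds to an introduction rule), or a justification operator $j \in \mathcal{J}$ brings it to a valid canonical form, or it is closable by every $c \in \mathcal{C}$. I expect this to require an explicit compatibility lemma relating the reduction operators realized by the tactics to the operators of the argument space, ensuring that a complete tactical construction — all leaves axioms, hence no open assumptions — lands in the valid fragment; the instance in Example~\ref{ex:naturalPtS}, where completed \calculus{NJ}-derivations are closed and reduce to canonical proofs, is the template to generalize.
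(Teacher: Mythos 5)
Your overall strategy is the paper's own strategy unfolded: the paper disposes of this corollary in one line, reading Theorem~\ref{thm:coherence} as saying that every rule of $\calculus{L}$ is admissible for the induced relation $\proves_\mathcal{L}$, and then identifying ``all rules admissible'' with the soundness condition of Definition~\ref{def:snc:calculus}; the induction over $\calculus{L}$-proofs that turns admissibility into soundness is left implicit there, and your induction is exactly that implicit step made explicit. The difference is that you try to actually construct the $\system{A}$-valid argument witnessing $\proves_\mathcal{L} s$, and that is where the proposal stops being a proof.

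There are three concrete gaps. First, your decoration step runs the machinery backwards: Definition~\ref{def:synthesizer} and Theorem~\ref{thm:coherence} go from tactic applications to $\calculus{L}$-rule instances, whereas to decorate an arbitrary $\calculus{L}$-proof with tactics you need the converse lifting --- that every rule instance $\rn{r}(s,s_1,\ldots,s_n)$ is the image under $\ergo$ and $\llbracket - \rrbracket$ of some tactic application $\tau : G \mapsto \langle [G_1,\ldots,G_n],\pi\rangle$ --- and nothing in the definitions supplies this: a rule, as a relation on sequents, may have instances hit by no tactic. Second, your base case asserts that an interpreted axiom-tactic yields an argument lying in $\setStyle{P}$; the argument space $\langle \setStyle{A},\setStyle{P},\mathcal{J},\mathcal{C}\rangle$ is not tied by any axiom of the framework to the tactical system or to the interpretation, so this is stipulation, not deduction. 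Third --- the gap you flag yourself --- achievement $\ergo\llbracket G \rrbracket \propto G$ is not $\system{A}$-validity in the sense of Definition~\ref{def:ptv:valid}, and the ``compatibility lemma'' you defer to is precisely the content of the corollary (that tactically constructed arguments are $\system{A}$-valid, i.e., witness $\proves_\mathcal{L}$); postponing it reduces the statement to an unproven assumption. It is fair to note that the paper's one-line proof elides the same bridge --- admissibility for $\proves_\mathcal{L}$ means preservation of ``has an $\system{A}$-valid argument,'' which the coherence theorem does not literally deliver --- but your proposal, which openly arrives at that bridge and does not cross it, is incomplete as a proof.
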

\begin{proof}
    Theorem~\ref{thm:coherence} states that every rule in $\calculus{L}$ is admissible for the logic induced by $\argument{A}$, which is the soundness condition in Definition~\ref{def:snc:calculus}.
\end{proof}

In this way, a sequent calculus characterizes inference, and a tactical system characterizes the construction of arguments.  This means that the notion of inference for a logic can be as rough or as refined as one desires. For example, one may take the trivial sequent calculus for a consequence relation, which has the consequence of the logic as axioms, but then one admits no tactics. Importantly, one has no way of constructing arguments. Though permissible, this situation is quite degenerate. Instead, one may use some notion of argument to inform what inferences are to be permitted.

We have thus shown that the tripartite framework captured by tactical proof and proof-theoretic semantics is coherent in the sense that arguments, sequent calculi, and tactics have the expected relationship. As stated in Section~\ref{sec:intro}, this framework does not arise from doxastic considerations of what these things should be but rather from how they are used in practice. We have so far been led by a heuristic account and now justify it by a series of examples drawn from the literature on logic. 

\subsection{Examples of the Framework} \label{sec:main:example} 

We provide a brief survey of how various proof-search activities in the literature are instances of the framework in this paper. This survey is far from complete and left at a quite high level as it only illustrates the descriptive power of the framework we have presented --- namely, the relationship between proof-theoretic validity of arguments and inference as witnessed through the reductive logic carried by tactical proof. Of course, in addition, there are also the examples of natural deduction (Example~\ref{ex:naturalPtS}) dialogue games (Example~\ref{ex:games}) in Section~\ref{sec:main:ptv}. 

\begin{Example}[Focused Systems] \label{ex:focusing}
    The problem of proof-search is handling the various choices involved, such as, \emph{inter alia}, the choice of a rule to use and the choice of an instance of that rule. This problem motivates the concept of \emph{focused} proof-search, introduced by Andreoli~\cite{Andreoli1992}, where these things are largely determined. We review a typical approach (see, for example, Chaudhuri~\cite{Chaudhuri2016IL,Chaudhuri2019CL} and Gheorghiu and Marin~\cite{Gheo2021Foc}) for studying focusing.

    One begins a sequent calculus $\calculus{L}$ for which one wishes to establish the focusing property (i.e., that the class of focused proofs is complete for the logic). One introduces an augmented version $\calculus{FL}$,  called the focused system, which arises from enriching the original calculus with control structures and introducing $\cut$. In the framework of this paper, we can describe the situation as follows: one has a system of tactics $\calculus{T}$ that is validated by $\calculus{L}$ (i.e., one has a synthesizer from $\calculus{T}$ to $\calculus{L}$) such that a tactic is interpreted as an $\calculus{FL}$-proof. The space of arguments contains all $\calculus{FL}$-proofs, and the justification operators are given by cut-reduction. This is set up such that the canonical proofs then represent \emph{focused} $\calculus{L}$-proofs. 
    \end{Example}

\begin{Example}[Hyper-sequent Calculi]
Reasoning in substructural and modal logics is often difficult because they seemingly do not admit analytic sequent calculi that do not have extra-logical structures (e.g., labels). Many such logics do admit \emph{hyper}-sequent calculi; that is, calculi over finite multisets of sequents --- see, for example, Baaz et al.~\cite{Baaz} and Ciabattoni et al.~\cite{Ciabattoni}. We can use the framework of this paper to describe the relationship of hyper-sequent calculi to the logic.
    
    One has a system of tactics $\calculus{T}$ of goals that are hyper-sequents such that the tactics are interpreted as reductions in the hyper-sequent calculus. These tactics are valid relative to a notion of achievement defined as follows: a consequence of the logic achieves a hyper-seqeunt iff it is among the sequents in the multiset. 
\end{Example}

\begin{Example}[Analytic Tableaux]
 Analytic tableaux give a computationally useful paradigm of proof in logic. It has been extensively treated for modal logic (see, for example, Fitting and Mendelsohn~\cite{fitting2012first})  and has been used to provide a uniform and modular proof theory for the family of bunched logics (see Docherty and Pym~\cite{docherty2018modular,docherty2019bunched}). 
Typically, these systems make use of prefixed signed formulas. The framework of this paper can be used to describe the relationship between a tableaux system and a logic. 

      One has a system of tactics $\calculus{T}$ whose goals are prefix signed formulas and whose tactics represent expansion rules for the system, and tacticals are sequential composition. These tactics are interpreted in the space of arguments containing the tableaux, in which the canonical proofs are closed tableaux, possibly satisfying a particular expansion scheme. The tactical system is valid relative to a notion of inference supplied by a \emph{relational} sequent calculus in the form of Negri~\cite{Negri2005} and Gheorghiu and Pym~\cite{GRvAC}.
\end{Example}

\section{Conclusion} \label{sec:conclusion}

Typically, when logics are used as a reasoning technology in informatics and other systems-oriented sciences --- for example, the use of logic in program and system verification, natural language processing, knowledge representation, and proof assistants --- it is through the paradigm of reductive logic. That is, when logic is employed in these fields, it is by reducing a putative conclusion to some sufficient premisses (as opposed to the deductive view in which a logic is understood as stating that a conclusion may be inferred from some established premisses).

We declare a notion of inference by fixing a sequent calculus characterizing that logic. What one constructs during backward reasoning is a certificate which we call an \emph{argument}. In general, these arguments are abstract mathematical entities with dynamic and static properties that can be studied in their own right. How is backward reasoning and inference in the logic related? This question is, of course, answered in each use-case; for example, in Section~\ref{sec:ptv-IPL}, we saw how (the construction of) a natural deduction argument can be understood in terms of a sequent calculus. In this paper, we provide a meta-theoretic framework that \emph{uniformly} describes these answers, thereby showing how different deployments of logic as a reasoning tool exercise the same underlying principles.

The essential criterion for whatever notion of \emph{validity} one has for a space of arguments is that the valid arguments are precisely those that certify the consequences of the logic. However, we are concerned about constructing arguments according to backward inference. From this perspective, an argument is valid because it respects a notion of inference for the logic. We use proof-theoretic semantics in the Dummett-Prawitz tradition, grounded in inferentialism, to give a formal definition of soundness and completeness that captures these ideas.

The theory of tactical proof introduced by Milner~\cite{milner1984tactics} is a widely deployed metalogical framework supporting reductive logic. We use tactics to represent the construction of arguments. Specifically, each tactics represents a backward inference step and is interpreted as an argument. A tactic is valid (in a sense given by Milner) precisely when they respect the sequent calculus characterizing the logic in question. This way, tactical proof formally and generally relates arguments and inferences within reductive logic.

In summary, we have a tripartite framework for describing the use of logic as a reasoning tool: there are arguments, sequent calculi, and tactics, and they are related through various validity conditions. This descriptive framework is correct in the sense that these validity conditions, which arise independently according to how the aspects typically relate to one another, cohere. Its descriptive power is illustrated by a series of seemingly disparate examples that are represented uniformly.

Within the framework, we can express other semantics of reductive logic it (e.g., the game semantics for proof-search in IPL --- see Example~\ref{ex:games}), but it remains to conduct such a study in general. Moreover, proof-search arises from reductive logic by introducing control structures that determine what reduction is made at what time. While the framework can witness specific control structures in specific systems (see, for example, the treatment of focused systems in Example~\ref{ex:focusing}), can it give an abstract account of what control is? These questions remain to be studied.

%\emph{Data availability statement:} 

%There is no data associated with this manuscript.

%\input{notes}

\subsection*{Acknowledgements}
We are grateful Tao Gu and the reviewers of an earlier version of this paper for their comments and feedback, which have helped to produce the present paper.

\bibliographystyle{siam}
\bibliography{bib}

\end{document}